\let\csname equation*\endcsname\relax
\let\csname endequation*\endcsname\relax
\theoremstyle{definition}
\algrenewcommand\algorithmicrequire{\textbf{Input:}}
\algrenewcommand\algorithmicensure{\textbf{Output:}}
\newtheorem{proposal}{Proposal}
\algrenewcommand{\Return}{\State\algorithmicreturn~}
\newtheorem{prop}{Proposition}
\algnewcommand{\LineComment}[1]{\State \(\triangleright\) #1}
\renewcommand*\env@matrix[1][*\c@MaxMatrixCols c]{%
	\hskip -\arraycolsep
	\let\@ifnextchar\new@ifnextchar
	\array{#1}}
\renewcommand{\algorithmicrequire}{\textbf{Input:}}
 \renewcommand{\algorithmicensure}{\textbf{Output:}}
 \algnewcommand{\Initialize}[1]{%
  \State \textbf{Initialize:}
  \Statex \hspace*{\algorithmicindent}\parbox[t]{.8\linewidth}{\raggedright #1}
}
\NewDocumentCommand{\INTERVALINNARDS}{ m m }{
    #1 {,} #2
}
\NewDocumentCommand{\interval}{ s m >{\SplitArgument{1}{,}}m m o }{
    \IfBooleanTF{#1}{
        \left#2 \INTERVALINNARDS #3 \right#4
    }{
        \IfValueTF{#5}{
            #5{#2} \INTERVALINNARDS #3 #5{#4}
        }{
            #2 \INTERVALINNARDS #3 #4
        }
    }
}
\begin{document}

\title{Randomized benchmarking
    for qudit Clifford gates}
\author{%
Mahnaz Jafarzadeh$^{1,3,6}$,
Ya-Dong Wu$^{2,3,5,6}$,
Yuval R. Sanders$^{4}$
and Barry C. Sanders$^{3,5}$}
\address{$^1$%
Physics Department, Faculty of Sciences, Urmia University, Post Box 165, Urmia, Iran}
\address{$^2$%
QICI Quantum Information and Computation Initiative, Department of Computer Science, The University of Hong Kong, Pok Fu Lam Road, Hong Kong}
\address{$^3$%
Institute for Quantum Science and Technology, University of Calgary, Alberta T2N~1N4,
Canada}
\address{$^4$%
    Department of Physics and Astronomy
        and Australian Research Council Centre of Excellence
            in Engineered Quantum Systems,\\
    Macquarie University, Sydney, New South Wales 2109, Australia
}
\address{$^5$%
Shanghai Branch, National Laboratory for Physical Sciences at Microscale, University of Science and Technology of China, Shanghai
201315, People's Republic of China}
\address{$^6$%
These two authors contributed equally}
\ead{m.jafarzadeh@urmia.ac.ir}
\ead{sandersb@ucalgary.ca}
\date{\today}

\begin{abstract}
We introduce unitary-gate randomized benchmarking (URB) for qudit gates by extending single- and multi-qubit URB to single- and multi-qudit gates. Specifically, we develop a qudit URB procedure that exploits unitary 2-designs. Furthermore, we show that our URB procedure is not simply extracted from the multi-qubit case by equating qudit URB to URB of the symmetric multi-qubit subspace. Our qudit URB is elucidated by using pseudocode, which facilitates incorporating into benchmarking applications.
\end{abstract}

\section{Introduction}
\label{sec:Introduction}
Quantum computing and quantum communication typically focus on quantum information encoded and processed with quantum bit (qubit) strings,
but replacing qubits by higher-dimensional qudit strings~\cite{GKP01,BDS02} can be advantageous~\cite{EFK+18} for quantum simulation~\cite{NAB+09}, quantum algorithms~\cite{LBA+09,TV16,BRS17}, quantum error correction~\cite{DGP13,MSB+16,MLZ+18}, universal optics-based quantum computation~\cite{NCS18}, quantum communication~\cite{CDB+19,LZE+19} and fault-tolerant quantum computation~\cite{CAB12,CAM14}. 
Qudit quantum-information process could reduce space requirements and exploit natural properties such as orbital angular momentum for photons~\cite{BEW+17}, superconductors~\cite{NAB+09} and neutral atoms~\cite{Hec17}. 
Specifically, quantum computing on higher-dimensional systems can be more efficient than on qubits~\cite{BRS17,CAB12,GEZ+19,GP13}.
Moreover, there exist genuine entangled states on higher-dimensional systems that cannot be simulated by 
the tensor product of pairwise entangled qubit states~\cite{KRB+18}.
Ultimate success of quantum computing, both qubit- and qudit-based, depends on being scalable,
which,
in turn,
requires components meeting fault-tolerance conditions~\cite{Sho96}.

Unitary-gate randomized benchmarking (URB)
is the preferred technique to characterize unitary-gate performance due to its efficiency~\cite{MGE11,MGE12},
which is robust against state-preparation-and-measurement (SPAM) errors
and exponentially superior to the alternative of quantum process tomography (QPT)~\cite{MRL08,JLQ+18}.
URB estimates average fidelity between real and ideal implementation of all~$24$ Clifford gates~in the single-qubit Clifford group, $\mathcal{C}_2$,
which normalizes the Pauli group
\begin{equation}
    \bm\Xi_2:=\langle\text{i}
        \mathds1;X_d;Z_d\rangle
\end{equation}
for angular brackets denoting the generating set~\cite{NC10}.
Average unitary-gate fidelity is obtained
by estimating survival decay rate vs gate-sequence length~\cite{MGE12}.
RB is well developed for qubits but untouched for qudits;
here we introduce qudit URB.

We develop qudit URB by combining
single- and multi-qubit URB theory~\cite{KLR+08,MGE11,MGE12,JS14,ATB16,PRY+17,HFGW18,PCR+18}
with qudit
(Hilbert-space dimension~$d$ with $d=2$ for the qubit)
theory including the generalized Pauli group 
$\mathcal{P}_d$~\cite{GKP01,BDS02},
the qudit Clifford group~$\mathcal{C}_d^n$~(\ref{eq:Cdn})~\cite{Got99,App05,Gro06}
for~$n$ qudits
and its connection to the unitary 2-design (U2D)~\cite{GAE07,DCEL09}.
U2Ds are valuable as they enable efficient sampling of a random unitary matrix.
Averaging over a U2D uniform distribution is identical to averaging over the unitary group over the uniform (Haar) measure.

Na\"{i}vely,
qudit URB could be regarded as trivially arising from multi-qubit URB~\cite{MGE12},
which we show is not so,
thereby justifying utilizing U2D properties of the qudit Clifford group.
Our qudit URB scheme reduces to qubit URB as a special case.

Our paper is organized as follows. In \S\ref{sec:Background}, we review background knowledge on generalized Pauli group, qudit Clifford group and U2D for qudits.
Our approach is detailed in \S\ref{sec:Approach}.
We present our results in \S\ref{sec:Results}.
Finally,
\S\ref{sec:Discussion} and \S\ref{sec:Conclusion} provide our discussion and conclusion, repectively.

\section{Background}
\label{sec:Background}
In this section, we provide the required background to address the qudit URB.
As discussing the Clifford group without first an overview of Pauli group is not complete,
we begin the section by explaining generalized Pauli group.
Then we proceed with a discussion of qudit Clifford group.
Finally,
we explain U2D and discuss that qudit Clifford group forms a U2D, which is the most important mathematical concept for qudit URB. 
\subsection{Generalized Pauli group}
\label{Generalized Pauli group}
In this subsection, we describe generalization of Pauli group for qudits.
First, we begin with the mathematical concept of qudits and introducing generalized Pauli operations.
Then we explain generalized Pauli group.
Finally, we describe this group for $n$ qudits.

Mathematically, a qudit is a vector
in $d$-dimensional Hilbert space~$\mathscr{H}_d\cong\mathbb{C}^d$
spanned by the orthonormal computational basis~$\{\ket{s}; s\in\mathbb{Z}_d\}$ where 
$\mathbb{Z}_d:=\{0,1,\dots,d-1\}$.
Qudit unitary transformations are represented by
unitary matrices $\{U_d\}\in U(d)$,
with~$U(d)$ the $d$-dimensional unitary Lie group.
These unitary transformations include generalized Pauli operations~$\bm\Xi_d$, namely,
\begin{equation}
    X_d\ket{s}=\ket{s\oplus1},\,
    Z_d\ket{s}=\omega^s\ket{s},\,
    \omega:=\exp\left(2\pi\text{i}/d\right)
\end{equation}
being defined by their actions on computational basis states
with $\oplus$
denoting addition modulo~$d$~\cite{GKP01,BDS02}. Both $X_d$ and $Z_d$ have order d, obeying $X^d=Z^d=\mathds{1}_d$.
The usual Pauli operators arise for $d=2$.

The generalized Pauli operators $X_d$ and $Z_d$ satisfy the commutation relation 
\begin{equation}
    Z_dX_d=\omega X_dZ_d.
\end{equation}
These operators generate the generalized Pauli group
\begin{equation}
    \mathcal{P}_d:=\langle \tilde{\omega} \mathds{1}_d, X_d,Z_d\rangle,
\end{equation}
for $\tilde{\omega}=\omega$ if $d$ is odd
and $\tilde{\omega}=\omega^{1/2}$ if $d$ is even. If $d$ is odd, the order of $Z_dX_d$ is $d$,
whereas, if~$d$ is even,
then the order of~$Z_dX_d$
is~2$d$, which contributes additional roots of unity.
Therefore, $\tilde{\omega}$ is defined differently for~$d$ even vs odd. The difference between the generalized and usual Pauli operators is that the qudit operators for $d>2$ are only unitary, and not Hermitian.
Hence, any eigenvalue of a qudit Pauli operator can be estimated only via quantum phase estimation algorithm~\cite{NC10}.

The $n$-qudit Hilbert space and the associated state (density operator) space are denoted $\mathscr H_d^{\otimes n}$ and $\mathcal D\left(\mathscr H_d^{\otimes n}\right)$, respectively.
For $n$ qudits,
the generalized Pauli group is
$\mathcal{P}_d^n:=\mathcal{P}_d^{\otimes n}$.
This generalized Pauli group further generalizes to
$\widetilde{\mathcal{P}}_d^n
    :=\mathcal{P}_d^n/\langle\tilde{\omega}\mathds{1}_d\rangle$
without phases for $\langle\tilde{\omega}\mathds{1}_d\rangle$, which indicaties the group generated by $\tilde{\omega}\mathds{1}_d$~\cite{Web16}. 
\subsection{Qudit Clifford group}
\label{subsec:qd Clifford group}
In this subsection, we explain $n$-qudit Clifford group. First we begin the section by describing the Clifford group for qudits. Then we explain the simple case of $d=3$.

As the normalizer of the generalized Pauli group,
the qudit Clifford group comprises all unitary operators that map $\mathcal{P}_d^n$ to itself under conjugation.
Hence, the $n$-qudit Clifford group~(\ref{eq:Cdn})
is~\cite{Web16}
\begin{align}
\label{eq:Cdn}
   \mathcal{C}_d^n
    :=&\{c\in U(d^n) ; c \widetilde{\mathcal{P}}_d^nc^{\dagger}\in \mathcal{P}_d^n \}/ \{\text{e}^{\text{i}\theta}\mathds{1}_d^n; ~\theta\in \mathbb{R}\nonumber\}\\
   =&\langle \text{CZ}_d, \text{F}_d, \text{P}_d, Z_d \rangle,
\end{align}
i.e.,
generated by
controlled-$Z$
\begin{equation}
    \mathrm{CZ}|ss'\rangle:=\omega^{ss'}|ss'\rangle,
\end{equation}
quantum Fourier transform
\begin{equation}
    \mathrm{F}|s\rangle:=\frac{1}{\sqrt{d}}\displaystyle\sum_{s'\in \mathbb{Z}_d}\omega^{ss'}|s'\rangle,
\end{equation}
phase gate
\begin{equation}
    \mathrm{P}|s\rangle:=\omega^{\frac{s(s+\varrho_d)}{2}}|s\rangle,\; \varrho_d= \begin{cases}
1, &\text{if $d$ is odd,}\\
0, &\text{otherwise,}
\end{cases}
\end{equation}
and Pauli-Z gates~\cite{Far14,HDD05,Pro19}.
Any $n$-qudit Clifford gate can be decomposed into multiplicative and tensor products of these gates.
The cardinality of the single-qudit Clifford group has been explicitly calculated~\cite{Tol18}. 
For $d\in\mathbb{P}$ (prime),
the number of distinct Clifford gates (up to global phase) for the single-qudit case is $d^3\left(d^2-1\right)$~\cite{App05}.

For $d=3$, $\mathcal{C}_3$ comprises 216 Clifford gates~\cite{GRT19},
which are generated by single-qutrit Fourier transform~$F_3$ and phase gate~$S_3$.
Specifically, any single-qutrit Clifford operation can be obtained by a product of three elements in
$\mathscr L$,~$\mathscr M$ and~$\mathscr N$~\cite{GRT19},
where~$\mathscr L$ is the subgroup of~$\mathcal C_3$ generated by~$P_3$ and~$X_3$
and
\begin{equation}
    \mathscr M=\{\mathds 1, F_3^2\},\,
    \mathscr N=\{\mathds 1, F_3, S_3F_3, S_3^2 F_3\}
\end{equation}
for~$F_3^2$ and~$S_3^2$
are the squares of~$F_3$ and~$S_3$, respectively.

 Similar to qubit Clifford circuits, quantum circuits with only prime-dimensional qudit Clifford gates
can be classically efficiently simulated~\cite{ME12}. 
To achieve universal quantum computing,
at least one non-Clifford gate must be added to the set of Clifford gates. 
An example of single-qudit non-Clifford gates is the generalized T-gate~\cite{Pro19}
\begin{equation}
\label{eq:T}
    T\ket{s}
        :=\omega^{s^3/d^2}\ket{s}.
\end{equation}
For qubits, T-gates can be benchmarked by dihedral-group benchmarking~\cite{CWE15}, instead of benchmarking U2D,
which is our focus.
Whether dihedral benchmarking can be generalized to the qudit case is an open problem.

\subsection{Unitary 2-design for qudits}
\label{subsec:2-design}
In this subsection, we explain U2D, which is an essential concept for a scalable and efficient URB.
Then we discuss U2D for the multi-qudit Clifford group. 

A U2D comprises unitary matrices 
$\{U_{j}\}_{j=1}^K$
satisfying~\cite{DCEL09}
\begin{equation}\label{eq:2-design}
\frac{1}{K}\displaystyle\sum_{j=1}^K U^{\dagger}_{j}\mathcal E(U_{j}\rho U^{\dagger}_{j})U_{j}=\int_{U(d)}\text{d}U\; U^{\dagger}\mathcal E(U\rho U^{\dagger})U,
\end{equation}
for any quantum channel~$\mathcal E$~\cite{wil11}
and any state~$\rho$ 
with~$\text{d}U$ denoting the unitarily invariant Haar measure~\cite{BR86}
on the Lie group $U(d)$.
Eq.~(\ref{eq:2-design}) implies that twirling any quantum channel over U2D is equivalent to twirling over a Haar-measure unitary group. 
The multi-qubit Clifford group $\mathcal{C}_2^n$ for~$n$ qubits forms a U2D~\cite{DCEL09,DDB02}.

Though not explicitly stated, the multi-qudit Clifford group $\mathcal{C}_d^n$ 
evidently forms a U2D based on Webb's analysis of Pauli-mixing Clifford ensembles~\cite{Web16}.
A set $\mathcal{S} \subseteq \mathcal{C}_d^n$ is called a \emph{Pauli-mixing Clifford ensemble}
(Def.~3 in~\cite{Web16})
if every pair of non-identity Pauli operators are related,
up to a phase,
by a Clifford conjugation
and the number of Clifford operators for each conjugation is constant.
Thus,
a Clifford operator chosen uniformly at random from $\mathcal{S}$ maps 
every non-identity Pauli operator to every other non-identity Pauli operator with equal probability.
Webb's Lemma~3, together with Lemma~2, says that $\mathcal{C}_d^n$ is both Pauli-invariant
(for each $c\in\mathcal S$,
$c \Xi$, up to a phase, is in~$\mathcal S$
as well for all $\Xi\in\mathcal P$)
and Pauli-mixing, and
Lemma~1 says that, if a Pauli-invariant Clifford ensemble is Pauli-mixing,
then this ensemble is a U2D.
Hence, $\mathcal{C}_d^n$ is a U2D.
\section{Approach}
\label{sec:Approach}
In this section, we present our approach to standard URB protocol.
We use the U2D property for the multi-qudit Clifford group to prove twirling over this group is depolarizing.
Then, we discuss averaged sequence fidelity obtained from depolarizng parameter. 
The section ends with explaining the fitting model for URB.
\subsection{Twirling over qudit Clifford group}
In this subsection we show how exploiting U2D property of $n$-qudit Clifford group result in depolarizing channel. Twirling a quantum channel with respect to a group of unitary operations is the basic approach utilized by URB. 
Twirling~$\mathcal E$ with average fidelity
\begin{equation}
    \bar{F}_{\mathcal E}=\int_{\mathscr H_d} \text{d} \phi \bra{\phi}\mathcal E\left(\ket{\phi}\bra{\phi}\right)\ket{\phi}
\end{equation}
over Haar-random unitary operations
yields a depolarizing channel~\cite{HHH99,Nie02}
\begin{equation}
    \int_{U(d)}\text{d}U\;U^{\dagger}\mathcal E(U\rho U^{\dagger})U=\mathcal E_{\text{dep}}(\rho)
\label{twirled channel},
\end{equation}
with the same average fidelity as for~$\mathcal E$,
where
\begin{equation} \label{relationBetweenpandF}
    \mathcal E_{\text{dep}}(\rho):= p\rho+(1-p)\frac{\mathds1}{d^n},\,
    p=\frac{d^n \bar{F}_{\mathcal E}-1}{d^n-1}.
\end{equation}

\begin{prop}\label{prop:twirling}
Twirling a channel over an $n$-qudit Clifford group yields a depolarizing channel.
\end{prop}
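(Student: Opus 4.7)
The plan is to invoke two results already established earlier: first, that the $n$-qudit Clifford group $\mathcal{C}_d^n$ forms a unitary 2-design (U2D), as shown at the end of Section~\ref{subsec:2-design} via Webb's Pauli-mixing and Pauli-invariance lemmas; and second, that a Haar twirl over $U(d^n)$ maps any channel to a depolarizing channel, as recorded in Eq.~(\ref{twirled channel}). Chaining these two facts yields the proposition almost immediately, with no essentially new calculation required.

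Concretely, I would first apply the U2D identity of Eq.~(\ref{eq:2-design}), lifted from $U(d)$ to $U(d^n)$, to the $K = |\mathcal{C}_d^n|$ Clifford elements. This equates the Clifford twirl to a Haar twirl,
\[
\frac{1}{|\mathcal{C}_d^n|}\sum_{c \in \mathcal{C}_d^n} c^\dagger\, \mathcal{E}\bigl(c \rho c^\dagger\bigr)\, c \;=\; \int_{U(d^n)} \mathrm{d}U\; U^\dagger\, \mathcal{E}\bigl(U \rho U^\dagger\bigr)\, U.
\]
I would then invoke Eq.~(\ref{twirled channel}) in dimension $d^n$ to identify the right-hand side with $\mathcal{E}_{\text{dep}}(\rho) = p \rho + (1-p)\mathds{1}/d^n$, with the depolarizing parameter $p = (d^n \bar{F}_{\mathcal{E}} - 1)/(d^n - 1)$ inherited directly from Eq.~(\ref{relationBetweenpandF}).

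The essential work is not in this proposition itself but upstream, in verifying that $\mathcal{C}_d^n$ satisfies the U2D condition for arbitrary $d$ and $n$; granted that, the statement reduces to a one-line composition of a definition with a standard identity. The only subtle point worth flagging is that Eqs.~(\ref{eq:2-design}) and (\ref{twirled channel}) are stated in the paper for a single $d$-dimensional system, so I would want to re-state them in dimension $d^n$ before applying them, ensuring in particular that the depolarizing parameter is expressed in terms of the total Hilbert-space dimension $d^n$ rather than the single-qudit dimension $d$. Beyond this bookkeeping, there is no genuine obstacle: the equality of the Clifford and Haar twirls is exact (not merely in average fidelity), so the depolarizing form of the output and its parameter are obtained in a single step.
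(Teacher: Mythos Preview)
Your proposal is correct and matches the paper's own proof essentially line for line: the paper simply combines the fact that $\mathcal{C}_d^n$ is a U2D (so the Clifford twirl equals the Haar twirl via Eq.~(\ref{eq:2-design})) with Eq.~(\ref{twirled channel}) to obtain Eq.~(\ref{twirlCliffordDepolarizing}). Your additional remark about restating the identities in dimension $d^n$ is a helpful bit of bookkeeping that the paper leaves implicit.
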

\begin{proof}
Combining Eq.~(\ref{twirled channel})
with the fact
that the $n$-qudit Clifford group forms a U2D
yields
\begin{equation}
\label{twirlCliffordDepolarizing}
\frac{1}{|\mathcal C^n_d|}\sum_{l=1}^{|\mathcal C^n_d|} \mathscr{C}_l^{-1} \circ \mathcal E\circ \mathscr{C}_l=\mathcal E_{\text{dep}}.    
\end{equation}
Averaging over the $n$-qudit Clifford group is identical to averaging over the unitary group with respect to the uniform Haar measure
so Eq.~(\ref{twirlCliffordDepolarizing})
follows.
\end{proof}
\subsection{Averaged sequence fidelity} 
In this subsection, first, we set some notation that will be used throughout.
Then we discuss how to generate a random sequence of Clifford gates, and obtain averaged sequence fidelity. 

A concatenation,
or composition,
of $n$-qudit Clifford gates in a sequence of length $m$ is denoted
$\bigcirc_{j=1}^{m} \mathscr{C}_{i_j}$
for $\bigcirc$ indicating concatenation. 
We use $i_j$ as labels for the $j^\text{th}$ element in the $i^\text{th}$ sequence, where $i_j\in\left[\left|\mathcal{C}_d^n\right|\right]$,
with $[k]:=\{1,\dots,k\}$,
and
the gate is denoted~$\mathscr{C}_{i_j}$.
All Clifford gates experience the same noise, 
which is represented by a noisy channel~$\Lambda$ following an ideal Clifford gate.

 Averaging over random realizations of a sequence of Clifford gates
\begin{equation}
    S_{\bm{i}_m}=\bigcirc_{j=1}^m
    \Lambda \circ \mathscr{C}_{i_j}, {\bm{i}_m}:=\left(i_1,i_2,\dots,i_m\right)
\end{equation}
with ${\bm{i}_m}$ denoting an $m$-tuple and $\mathscr{C}_{i_m}=\left(\bigcirc_{j=1}^{m-1} \mathscr{C}_{i_{j}}\right)^{-1}$, is equivalent to concatenating
$m-1$ twirled channels~\cite{MGE11,MGE12} 
\begin{equation}
      \Lambda_{\textrm{T}}:=\frac{1}{|\mathcal C^n_d|}\sum_{l=1}^{|\mathcal C^n_d|} \mathscr{C}_l^{-1} \circ \Lambda\circ \mathscr{C}_l,
\end{equation}
followed by~$\Lambda$,
i.e., $\Lambda\circ \Lambda_{\textrm{T}}^{\circ m-1}$.
Using Proposition~\ref{prop:twirling}, $\Lambda_{\textrm{T}}^{\circ m}$ can be rewritten as an $m$-fold composition of a depolarizing channel with itself multiple times,
namely,
\begin{equation}\label{mfolddepchannel}
    \Lambda_{\textrm{T},p}^{\circ m-1}(\rho)=p^{m-1}\rho+(1-p^{m-1})\mathds{1}/d^n.
\end{equation}
Hence, for any input state~$\ket{\psi}\in \mathscr{H}_d^{\otimes n}$, \begin{equation}\label{eq:IdealSurvivalProbability}
    \tr\left[\ket{\psi}\bra{\psi}\Lambda\circ \Lambda_{\textrm{T},p}^{\circ {m-1}}(\ket{\psi}\bra{\psi})\right]
\end{equation}
is channel fidelity averaged over random realizations of the sequence.
\subsection{Fitting model}
In this subsection, we present the fitting function for URB, by which we model the behaviour of averaged sequence fidelity.
In practice,
with quantum noise, Eq.~(\ref{eq:IdealSurvivalProbability}) is replaced by
\begin{equation}
\label{eq:rearrangeSurvivalProbability}
    F_\text{seq}(m):=\tr\left[E_\psi\Lambda\circ \Lambda_{\textrm{T},p}^{\circ m-1}(\rho_\psi)\right]
\end{equation}
for $E_\psi$ and $\rho_\psi$ the positive-operator valued measure (POVM)~\cite{NC10} element and quantum state
including SPAM errors, respectively.
Plugging Eq.~(\ref{mfolddepchannel}) into Eq.~(\ref{eq:rearrangeSurvivalProbability}) yields
\begin{equation}
\label{fittingModel}
    F_\text{seq}(m)=A_0 p^{m-1}+B_0,
\end{equation}
which absorbs SPAM errors,
for 
\begin{equation*}
A_0:=\tr\left[E_\psi\Lambda\left(\rho_\psi-\frac{\mathds{1}}{d^n}\right)\right],\,
B_0:=\frac{\tr\left[E_\psi\Lambda(\mathds{1})\right]}{d^n}
\end{equation*}
being the coefficients.
\subsection{Summary of approach}
\label{subsec:summary}
In this section, exploiting U2D property of $n$-qudit Clifford group we presented twirling a channel over this group yields a depolarizing channel. Then, we obtained averaged sequence fidelity from depolarizing rate. Furthermore, we introduced fitting model for our URB.
This section relies on the assumption that quantum noise is gate-independent.
As gate-dependent noise decays in the same form
as gate-independent noise plus a perturbation~\cite{W18},
our approach could be naturally extended to the case of gate-dependent noise.

\section{Results}
\label{sec:Results}
In this section we present our main results. The first result of this paper is providing pseudocode for URB procedure. We proceed to explain the URB procedure for qudit Clifford gates. Then, giving a counterexample, we explain why multi-qubit URB does not readily yield qudit URB. 
\subsection{Randomized benchmarking as an algorithm}
\label{subsec:algorithm}
In this subsection, we provide pseudocode for our multi-qudit URB procedure,
which is immensely useful to ensure that the procedure flows logically and does not leave out any key steps.
Our algorithm
is designed to estimate average gate fidelity~$\bar{F}_\Lambda$
over a Haar-random set of input states.

Our pseudocode uses the following data types,
expressed conventionally as all capitals.
UNSIGNED INTEGER refers to a positive integer in~$\mathbb{Z}^+$,
and REAL, COMPLEX, BINARY, DARY and INTERVAL refer to real~$\mathbb R$, complex~$\mathbb C$,
binary~$\{0,1\}$,
d-ary~$\{0,1,\dots,d-1\}$
and the unit interval~$[0,1]\subset\mathbb{R}$,
respectively.
Besides classical data types, we introduce quantum data types as well~\cite{knil96}.
QDARY
refers to a qudit of dimension~$d$.
Each of these data types can be an array
with data type followed by brackets~$[~]$;
a sequence of two brackets $[~][~]$
denotes a two-dimensional array,
which is readily generalized to higher-dimensional arrays by adding more brackets.

For pseudocode variables, we use \textsc{camelCase},
so qudit number~$n$
is denoted by \textsc{numQud}
and of type UNSIGNED INTEGER,
and Hilbert-space dimension~$d$
is \textsc{HilbDim}
and also of type UNSIGNED INTEGER.
We use \textsc{numSeq} to denote the maximum number of different gate sequences of a fixed length and of type UNSIGNED INTEGER.
QDARY[~] indicates a multi-qudit state and
QDOP is an operation that maps a QDARY[\textsc{HilbDim}]-typed variable to another QDARY[\textsc{HilbDim}]-typed variable. 

In describing our algorithm using pseudocode,
we employ functions from an ideal library explained here.
We use \textsc{rand}(\textsc{maxInt}) to generate a uniformly random integer in [\textsc{maxInt}].
\textsc{prod} maps two Clifford-gate indices to the index corresponding
to the product of these two referenced Clifford gates.
\textsc{inv} maps one Clifford-gate index to the index corresponding to the  
inverse of the Clifford gate.       
We use data type \textsc{prep}
for preparing a qudit pure state according to a classical description of the state, and
 \textsc{projMeas} denotes qudit-state measurement that yields~$1$ if the qudit state is projected onto a certain pure state and otherwise yields~$0$.
For statistical processes,
we employ \textsc{avg},
which calculates the average of all entries in an array,
and \textsc{fit},
which is a least-squares regression algorithm.
Our randomized benchmarking algorithm comprises input, output and procedure,
which we now describe in plain English.
\subsubsection{Input}
We begin by explaining the input.
The input can be separated into two components,
those that are necessary to specify the benchmarking task and
those that are necessary to specify the benchmarking procedure.
To specify the benchmarking task, the user must specify 
the number of qudits $n \in \mathbb{Z}^+$,
the dimension of the Hilbert space $d \in \mathbb{Z}^+$,
and the cardinality $\left|\mathcal C^n_d\right|\in\mathbb{Z}^+$
of the $n$-qudit Clifford group.
Cardinality growing quickly for qudits implies significant experimental challenges.
Circuit depth for testing just the qutrit Clifford group, comprising 216 elements,
would be challenging.
We need a length-$\left|\mathcal C^n_d\right|$ array
of labels for Clifford operators
in order to be able to refer to them individually.

For the benchmarking procedure,
our algorithm caters to an experienced client
who is able to guess good parameters for the number~$k\in\mathbb{Z}^+$ of random Clifford 
sequences,
the number $l\in\mathbb{Z}^+$ of repetitions of each 
Clifford sequence
and the maximum length $m\in\mathbb{Z}^+$ of a random 
Clifford sequence to be executed.
This requirement that the client be able to select good parameters is typical for all qubit-based quantum benchmarking~\cite{JS14}.
Furthermore,
the client is expected to know that the noise model is specified
by an unknown CPTP map, and therefore knows that
the twirled noise model is entirely specified by a single unknown depolarising parameter.
The client aims to estimate this unknown parameter to within a target confidence,
which is not an input to the algorithm.
\subsubsection{Output}
The output of the algorithm is an estimate~$\hat{r}$ of the average gate infidelity, often called average error rate,
\begin{equation}
    r:=1- \bar{F}_\Lambda\in[0,1].
\end{equation}
This estimate is a URB figure of merit 
that characterizes average performance of qudit Clifford gates.
\subsubsection{Procedure}
Now we explain the URB procedure
for qudit Clifford gates.
We initialize the $n$-qudit state~$\ket{\psi}$
as the pure state $\ket0^{\otimes n}$.
Then we generate $k$ random sequences of $n$-qudit Clifford gates
$\mathscr{C}_{i_j}$ 
each of length~$j$, 
where $2\le j\le m$, as~$k$ samples of a random sequence. The first~$j-1$ gates in each sequence are uniformly randomly chosen from~$\mathcal C_d^n$,
and the final gate is determined by the first~$j-1$ gates according to
\begin{equation}
    \mathscr{C}_{i_j}=\left(\bigcirc_{j^\prime=1}^{j-1} \mathscr{C}_{i_{j^\prime}}\right)^{-1}.
\end{equation}
As Clifford gates form a group, this final gate is also an element of the group.
    
We apply each of the $k$ sequences of qudit gates to the initial state.
Then
we apply measurements corresponding to the POVM~$\left\{\Ket{\psi}\bra{\psi}, \mathds{1}-\Ket{\psi}\Bra{\psi}\right\}$ 
on the output state. 
If the measurement outcome corresponds to~$\ket{\psi}\bra{\psi}$,
we assign a value of one,
otherwise, a value of zero.
By averaging over~$k$ different sequences and~$l$ copies of each sequence, we obtain an estimate $\hat{F}(j)$ of the averaged sequence fidelity
\begin{equation}
\label{survivalprobabilityqubit}
        F_\text{seq}(j)=\tr(E_{\psi}\mathcal S_{j}(\rho_\psi)),\,
    \mathcal S_{j}=\frac{1}{k} \sum_{\bm{i}_{\bm{j}-1}\in \left[\left|\mathcal C_d^n\right|\right]^{\otimes j-1} } S_{\bm{i}_j}.
\end{equation}
Now we repeat the above procedure for different values of~$j$, which increases from two to the maximum length~$m$
in succession.
Finally, we fit the estimates $\hat{F}(j)$ to Eq.~(\ref{fittingModel})
with~$p$ the decay parameter
and~$\hat{p}$ its estimate.

Per Eq.~(\ref{relationBetweenpandF}), we see
that fidelity decay parameter $p$ is related to~$r$
    via
\begin{equation}\label{eq:averageErrorRate}
    r=(1-p)\left(1-\frac{1}{d^n}\right).
\end{equation}
Therefore, by estimating $p$ from URB of Clifford gates, we obtain the output average infidelity (algorithm~\ref{alg:quditRB}).
\begin{algorithm}[H]
\caption{Randomized Benchmarking} \label{alg:quditRB}
\begin{algorithmic}[1]
\Require{%
\Statex UNSIGNED INTEGER \textsc{numQud}
\Comment{\# qudits}
\Statex UNSIGNED INTEGER \textsc{HilbDim}
\Comment{Hilbert-space dimension}
\Statex UNSIGNED INTEGER \textsc{cardCliff} \Comment{Cardinality of \textsc{numQud}-qudit Clifford group }
\Statex UNSIGNED INTEGER \textsc{gateIndex}[\textsc{cardCliff}]
\Comment{Array of multi-qudit gates' labels}
\Statex UNSIGNED INTEGER  \textsc{maxLeng} \Comment{Maximal sequence length}
\Statex UNSIGNED INTEGER \textsc{numSeq}
\Comment{\# sequences for each length}
\Statex UNSIGNED INTEGER \textsc{numCop}
\Comment{\# copies for each gate sequence}
}
\Ensure{\Statex  INTERVAL \textsc{avInfid}
\Comment{Average infidelity}
    }
 \Procedure{randBench}{\textsc{numQud},
 \textsc{HilbDim}, \textsc{CliffGate}, \textsc{maxLeng}, \textsc{numSeq}, \textsc{numCop}}

 \State COMPLEX \textsc{initState}$\left[\textsc{HilbDim}\wedge\textsc{numQud}\right] \gets [0,0,\dots,0]$;
  \Comment{The initial state is the \textsc{numQud}-qudit pure state}
\State REAL \textsc{estRegr}[3];
\Comment{Base, slope and intercept for exponential fiting}
\State REAL \textsc{seqFid}[\textsc{maxLeng}];
 \Comment{Fidelity averaged over gate sequences}
 \State UNSIGNED INTEGER \textsc{currIndex}; 
    \State QDARY \textsc{state}[\textsc{numQud}];
 \For{$k=2:\textsc{maxLeng}$} 
  \State BINARY \textsc{outcome}$\left[\textsc{numCop}\right]$;
 \Comment{Array of measurement outcomes}
    \State  REAL \textsc{survProb}[\textsc{numSeq}];
     \Comment{Array of survival probabilities of the initial state}
 \For{$i=1:\textsc{numSeq}$}
 \State $\textsc{currIndex}\gets 1$; 
 \Comment{Overwrite $\textsc{currIndex}$ with $1$, the index refered to indentity channel}
    \For{$j=1:k-1$}
     \State $\textsc{gateIndex}[j]
   \gets \textsc{rand}\left(\textsc{cardCliff}\right)$;
    \Comment{Generate a random integer from one to \textsc{cardCliff}.}
     \State $\textsc{currIndex}\gets \textsc{prod}(\textsc{currIndex},\textsc{gateIndex})$;
    \EndFor
         \State $\textsc{gateIndex}\left[k\right]\gets \textsc{inv}(\textsc{currIndex})$; 
     \For{$l=1:\textsc{numCop}$}
    \State $\textsc{state} \gets \textsc{prep}(\textsc{initState}$);
 \For{$j=1:k$}
 \State QDOP \textsc{CliffGate}[\textsc{cardCliff}];
\Comment{Array of multi-qudit gates}
    \State $\textsc{state} \gets \textsc{CliffGate}\left[\textsc{gateIndex}[j]\right]*\textsc{state}$;
    \Comment{$j$th Clifford gate maps old state to new state}
    \EndFor
     \algstore{myalg}
\end{algorithmic}
\end{algorithm}
\begin{algorithm}                     
\begin{algorithmic}           
\algrestore{myalg}
    \If{$\textsc{projMeas}(\textsc{state}, \textsc{initState}) =1$ }  
    \State $\textsc{outcome}[l]\gets 1$;\Comment{Assign~$1$ if each single-qudit state is projected onto~\textsc{initState}}
    \Else
    \State $\textsc{outcome}[l]\gets 0$;\Comment{Otherwise, assign~$0$}
    \EndIf
\EndFor
\State $\textsc{survProb}[i]\gets \textsc{avg}(\textsc{outcome})$;
\EndFor
\State $\textsc{seqFid}[k]\gets \textsc{avg}(\textsc{survProb})$; 
\EndFor
\State \textsc{estRegr}
$\gets \textsc{fit}\left( \textsc{seqFid}\right)$; \Comment{Least-squares regression algorithm with exponential fitting model $y=\textsc{estRegr}[2]*\textsc{estRegr}[1]\wedge x+\textsc{estRegr}[3]$.}
\State \Return $\textsc{avInfid}\gets (1-\textsc{estRegr}[1])
*\left(1-1/\textsc{HilbDim}\wedge\textsc{numQud}\right)$.
  \EndProcedure
\end{algorithmic}
\end{algorithm}

\subsection{Multi-qubit randomized benchmarking versus qudit randomized benchmarking}
Now we explain why previous work on multi-qubit URB does not readily yield qudit URB.
One might expect that solving multi-qubit URB would yield qudit URB trivially.
Such an approach would exploit Schur-Weyl duality~\cite{GW09}.
Schur-Weyl duality,
applied to the symmetry group~$S_n$ and the unitary group $U(d)$,
which have commuting actions on the $n$-fold tensor product of $d$-dimensional Hilbert spaces,
$\mathscr H_d^{\otimes n}$,
states that,
under the joint action of~$S_n$ and~$U(d)$,
the tensor product space decomposes into a direct sum of tensor products of irreducible modules.
The question is whether we can use that to construct Clifford operators for qudits. 
We show that this enticing notion is fallacious by falsifying the following proposal.
\begin{proposal}\label{conj:tensor}
A tensor product of any~$n$ single-qubit Clifford operators is a direct sum of a Clifford operator for the $(n+1)$-dimensional symmetric space with any operators for the remaining partially and antisymmetric spaces.
\emph{(FALSE)}
\end{proposal}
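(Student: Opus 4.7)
The plan is to refute Proposal~\ref{conj:tensor} by exhibiting a single explicit counterexample. The smallest nontrivial case is $n=2$ qubits, whose symmetric subspace has dimension~$n+1=3$ and is therefore naturally identified with a single qutrit. For a tensor product to act block-diagonally with respect to the symmetric/antisymmetric decomposition, as the proposal implicitly requires, I take the two single-qubit Cliffords to be identical, so that the operator commutes with the swap action of~$S_2$.

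First, I fix the symmetric basis $\{\ket{00},\tfrac{1}{\sqrt{2}}(\ket{01}+\ket{10}),\ket{11}\}$ and identify it with the qutrit computational basis $\{\ket{0},\ket{1},\ket{2}\}$. Then I pick a specific single-qubit Clifford~$C$ and compute the symmetric block of $C\otimes C$ in this basis. A convenient choice is the phase gate $C=S=\operatorname{diag}(1,\text{i})$, which is a generator of~$\mathcal{C}_2$; a one-line calculation gives that the symmetric block equals $U=\operatorname{diag}(1,\text{i},-1)$.

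Next, I show $U\notin \mathcal C_3$ by testing its conjugation action on the Pauli generators of~$\mathcal P_3$. Because $U$ is diagonal, $U Z_3 U^\dagger = Z_3$, so this test is uninformative; the decisive test is on~$X_3$. A direct multiplication produces a monomial matrix whose three nonzero entries carry phases $\{\text{i},\text{i},-1\}$. I then argue that no element $\tilde\omega^a X_3^b Z_3^c\in\mathcal P_3$ can reproduce this pattern: every nonzero entry of such an element is of the form $\omega^{a+cs}$ with $\omega=\text{e}^{2\pi\text{i}/3}$, and matching the two $\text{i}$'s forces $c=0$, which in turn makes the remaining entry equal to~$\omega^a$, never $-1$. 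Hence $U X_3 U^\dagger \notin \mathcal P_3$, so $U \notin \mathcal C_3$, which falsifies Proposal~\ref{conj:tensor}.

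The main obstacle is conceptual rather than computational: one must recognise that the Schur-Weyl isomorphism between the $n$-qubit symmetric subspace and an $(n+1)$-dimensional qudit intertwines the joint $SU(2)$ action with its irreducible spin-$n/2$ representation, but has no reason to intertwine the Clifford structure, because the qudit Pauli group~$\mathcal P_{n+1}$ is built from modular arithmetic in~$\mathbb Z_{n+1}$ and involves roots of unity foreign to the qubit Clifford group. The explicit example above makes this abstract obstruction concrete and, because it sits in the smallest case, immediately rules out any dimension-counting or representation-theoretic patch of the proposal.
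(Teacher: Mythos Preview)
Your argument is correct and follows the same overall strategy as the paper: give an explicit $n=2$ counterexample by taking two identical single-qubit Cliffords, extracting the $3\times 3$ symmetric block, and showing that conjugation fails to send $X_3$ into~$\mathcal P_3$. The paper chooses the Hadamard gate and obtains the non-diagonal block
\[
R=\tfrac12\begin{pmatrix}1&\sqrt2&1\\ \sqrt2&0&-\sqrt2\\ 1&-\sqrt2&1\end{pmatrix},
\]
then checks $RX_3R\notin\mathcal P_3$; you instead take the phase gate $S=\operatorname{diag}(1,\text{i})$ and get the diagonal block $U=\operatorname{diag}(1,\text{i},-1)$. Your choice makes the verification slightly cleaner, since $UX_3U^\dagger$ is obtained by a trivial entrywise rescaling and the obstruction is visible immediately: the phase~$\text{i}$ simply does not lie in $\langle\omega\rangle$ for $\omega=\text{e}^{2\pi\text{i}/3}$, so no element of~$\mathcal P_3$ can match. (Your longer argument via ``matching the two~$\text{i}$'s forces $c=0$'' is also valid, but this one-line observation already suffices.) Conversely, the paper's Hadamard example has the minor conceptual advantage that~$R$ is real, so the failure cannot be blamed on an ``obvious'' root-of-unity mismatch; the obstruction is genuinely structural. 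Either example falsifies the proposal.
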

\noindent This proposal is enticing because we could simply use existing multi-qubit benchmarking work~\cite{MGE11} instead of producing a new result.

Mathematically,
this proposal can be expressed as follows.
Let~$\left\{C^{(2)}_\imath\right\}_{\imath=1}^n$
be a sequence of~$n$ single-qubit operators,
and let~$C^{(n+1)}$
be any Clifford operator on an $(n+1)$-dimensional Hilbert space~$\mathscr{H}_d$ for $d=n+1$.
The conjecture is then that,
for all $\left\{C^{(2)}_\imath\right\}_{\imath=1}^n$,
a Clifford operator~$C^{(n+1)}$ exists
such that
\begin{equation}
\bigotimes_\imath C^{(2)}_\imath=C^{(n+1)}\oplus \Theta^{(2^n-n-1)},    
\end{equation}
for any $(2^n-n-1)$-dimensional unitary operator $\Theta^{(2^n-n-1)}$.
We now demonstrate that this proposal is false by giving a counterexample.

\begin{proof}[\textsl{Counterexample}]
This proposal is falsified with a counterexample,
specifically for $C^{(2)}_\imath=H$
for $\imath\in\{1,2\}$.
The two-qubit Hadamard gate is
\begin{equation}
    H\otimes H
    =R\oplus(-1),\;
R:=\frac{1}{2}
\begin{pmatrix}
 1&\sqrt2&1\\
 \sqrt2&0&-\sqrt2\\
1&-\sqrt2&1
 \end{pmatrix},
\end{equation}
which is block-diagonal on both the symmetric and anti-symmetric subspaces.
However, the block part $R$ on the three-dimensional symmetric subspace is not a qutrit Clifford gate, as $RX_3R\notin \mathcal P_3$.
\end{proof}
\noindent Therefore, the proposal is falsified:
a tensor product of qubit Clifford gates
cannot in general be written as a direct sum involving Clifford gates,
and qudit Clifford gates are not directly
obtained from multi-qubit Clifford gates over the symmetric subspace. This falsification implies a significant difference between quantum computing on qudits vs on multiple qubits,
even for the same total dimension. 

\section{Discussion}
\label{sec:Discussion}
We have explained how we can characterize the average performance of the qudit Clifford gates directly through performing qudit URB. The procedure is similar to the qubit URB, just the unitary operators are chosen from qudit Clifford group. We have designed URB for qudit Clifford gates, by synthesizing the U2D property of them with qubit URB. We also devise a pseudocode, which provides the instructions on how to run randomized benchmarking algorithm on a quantum computer.

The U2D property of the qudit Clifford group indicates that twirling a noisy channel over this group yields a depolarizing channel per Proposition~\ref{prop:twirling}. Hence, analogous to qubit URB, we can relate the depolarizing parameter, estimated from URB procedure, to average error rate of qudit Clifford gates. 
On the other hand, the natural question that arises is,
given that multi-qubit URB has already been studied, whether qudit URB could be determined from multi-qubit case by considering the symmetric subspace of multi-qubits. We have explained that this symmetrization in Conjecture~\ref{conj:tensor} fails.
 \section{Conclusion}
\label{sec:Conclusion}
Our results extend previous URB results to higher dimensional qudits for estimating average error rate for gate independent errors, and pave the way for experimental characterization of qudit Clifford gates. Recent development on photonic qudit-based quantum computing~\cite{NCS18} provides a good test bed for our qudit URB.
We suggest that quantum optics will provide a good test by exploiting different
photonic degrees of freedom, for example orbital angular-momentum~\cite{BEW+17}, frequency~\cite{LLP+18,IJA+19}, and time~\cite{IJA+19}.

\ack
BCS acknowledges financial support from the Natural Sciences and Engineering Research Council of Canada and from the National Natural Science Foundation of China (NSFC) Grant No.~11675164.
YDW acknowledges support from the Hong Kong Research Grant Council through Grant No.~17300918.
YRS is funded by Australian Research Council Discovery Project DP190102633. 

\bibliography{bibfile}

\providecommand{\newblock}{}
\begin{thebibliography}{10}
\expandafter\ifx\csname url\endcsname\relax
  \def\url#1{{\tt #1}}\fi
\expandafter\ifx\csname urlprefix\endcsname\relax\def\urlprefix{URL }\fi
\providecommand{\eprint}[2][]{\url{#2}}

\bibitem{GKP01}
Gottesman D, Kitaev A and Preskill J 2001 {\em Phys. Rev. A\/} {\bf 64}(1)
  012310

\bibitem{BDS02}
Bartlett S~D, de~Guise H and Sanders B~C 2002 {\em Phys. Rev. A\/} {\bf 65}(5)
  052316

\bibitem{EFK+18}
Erhard M, Fickler R, Krenn M and Zeilinger A 2018 {\em Light Sci. Appl\/} {\bf
  7} 17146

\bibitem{NAB+09}
Neeley M, Ansmann M, Bialczak R~C, Hofheinz M, Lucero E,
  O{\textquoteright}Connell A~D, Sank D, Wang H, Wenner J, Cleland A~N, Geller
  M~R and Martinis J~M 2009 {\em Science\/} {\bf 325} 722--725 ISSN 0036-8075

\bibitem{LBA+09}
Lanyon B~P, Barbieri M, Almeida M~P, Jennewein T, Ralph T~C, Resch K~J, Pryde
  G~J, O’brien J~L, Gilchrist A and White A~G 2009 {\em Nat. Phys.\/} {\bf 5}
  134

\bibitem{TV16}
Tonchev H~S and Vitanov N~V 2016 {\em Phys. Rev. A\/} {\bf 94}(4) 042307

\bibitem{BRS17}
Bocharov A, Roetteler M and Svore K~M 2017 {\em Phys. Rev. A\/} {\bf 96}(1)
  012306

\bibitem{DGP13}
Duclos-Cianci G and Poulin D 2013 {\em Phys. Rev. A\/} {\bf 87}(6) 062338

\bibitem{MSB+16}
Michael M~H, Silveri M, Brierley R~T, Albert V~V, Salmilehto J, Jiang L and
  Girvin S~M 2016 {\em Phys. Rev. X\/} {\bf 6}(3) 031006

\bibitem{MLZ+18}
Grassl M, Kong L, Wei Z, Yin Z and Zeng B 2018 {\em IEEE Trans. Inf. Theory\/}
  {\bf 64} 4674--4685 ISSN 0018-9448

\bibitem{NCS18}
Niu M~Y, Chuang I~L and Shapiro J~H 2018 {\em Phys. Rev. Lett.\/} {\bf 120}(16)
  160502

\bibitem{CDB+19}
Cozzolino D, Da~Lio B, Bacco D and Oxenløwe L~K 2019 {\em Adv. Quantum
  Technol.\/} {\bf 0} 1900038

\bibitem{LZE+19}
Luo Y~H, Zhong H~S, Erhard M, Wang X~L, Peng L~C, Krenn M, Jiang X, Li L, Liu
  N~L, Lu C~Y, Zeilinger A and Pan J~W 2019 {\em Phys. Rev. Lett.\/} {\bf
  123}(7) 070505

\bibitem{CAB12}
Campbell E~T, Anwar H and Browne D~E 2012 {\em Phys. Rev. X\/} {\bf 2}(4)
  041021

\bibitem{CAM14}
Campbell E~T 2014 {\em Phys. Rev. Lett.\/} {\bf 113}(23) 230501

\bibitem{BEW+17}
Babazadeh A, Erhard M, Wang F, Malik M, Nouroozi R, Krenn M and Zeilinger A
  2017 {\em Phys. Rev. Lett.\/} {\bf 119}(18) 180510

\bibitem{Hec17}
Sosa-Martinez H 2017 (\textit{Preprint} \eprint{1706.06536})

\bibitem{GEZ+19}
Gao X, Erhard M, Zeilinger A and Krenn M 2019 (\textit{Preprint}
  \eprint{1910.05677})

\bibitem{GP13}
Garcia-Escartin J~C and Chamorro-Posada P 2013 {\em Quantum Inf. Process.\/}
  {\bf 12} 3625--3631

\bibitem{KRB+18}
Kraft T, Ritz C, Brunner N, Huber M and G\"uhne O 2018 {\em Phys. Rev. Lett.\/}
  {\bf 120}(6) 060502

\bibitem{Sho96}
Shor P~W 1996 Fault-tolerant quantum computation {\em Proc. 37th Conference on
  Foundations of Computer Science\/} pp 56--65 ISSN 0272-5428

\bibitem{MGE11}
Magesan E, Gambetta J~M and Emerson J 2011 {\em Phys. Rev. Lett.\/} {\bf
  106}(18) 180504

\bibitem{MGE12}
Magesan E, Gambetta J~M and Emerson J 2012 {\em Phys. Rev. A\/} {\bf 85}(4)
  042311

\bibitem{MRL08}
Mohseni M, Rezakhani A~T and Lidar D~A 2008 {\em Phys. Rev. A\/} {\bf 77}(3)
  032322

\bibitem{JLQ+18}
Varga J~J~M, Reb\'{o}n L, Stefano Q~P and Iemmi C 2018 {\em Opt. Lett.\/} {\bf
  43} 4398--4401

\bibitem{NC10}
Nielsen M~A and Chuang I~L 2010 {\em Quantum Computation and Quantum
  Information\/} (Cambridge university press)

\bibitem{KLR+08}
Knill E, Leibfried D, Reichle R, Britton J, Blakestad R~B, Jost J~D, Langer C,
  Ozeri R, Seidelin S and Wineland D~J 2008 {\em Phys. Rev. A\/} {\bf 77}(1)
  012307

\bibitem{JS14}
Wallman J~J and Flammia S~T 2014 {\em New J. Phys.\/} {\bf 16} 103032

\bibitem{ATB16}
Alexander R~N, Turner P~S and Bartlett S~D 2016 {\em Phys. Rev. A\/} {\bf
  94}(3) 032303

\bibitem{PRY+17}
Proctor T, Rudinger K, Young K, Sarovar M and Blume-Kohout R 2017 {\em Phys.
  Rev. Lett.\/} {\bf 119}(13) 130502

\bibitem{HFGW18}
Hashagen A~K, Flammia S~T, Gross D and Wallman J~J 2018 {\em Quantum\/} {\bf 2}
  85 ISSN 2521-327X

\bibitem{PCR+18}
Proctor T~J, Carignan-Dugas A, Rudinger K, Nielsen E, Blume-Kohout R and Young
  K 2018 (\textit{Preprint} \eprint{1807.07975})

\bibitem{Got99}
Gottesman D 1999 Fault-tolerant quantum computation with higher-dimensional
  systems {\em Quantum Computing and Quantum Communications\/} ed Williams C~P
  (Berlin: Springer) pp 302--313

\bibitem{App05}
Appleby D~M 2005 {\em J. Math. Phys.\/} {\bf 46} 052107

\bibitem{Gro06}
Gross D 2006 {\em J. Math. Phys\/} {\bf 47} 122107

\bibitem{GAE07}
Gross D, Audenaert K and Eisert J 2007 {\em J. Math. Phys.\/} {\bf 48} 052104

\bibitem{DCEL09}
Dankert C, Cleve R, Emerson J and Livine E 2009 {\em Phys. Rev. A\/} {\bf
  80}(1) 012304

\bibitem{Web16}
Webb Z 2016 {\em Quantum Inf. Comput.\/} {\bf 16} 1379--1400

\bibitem{Far14}
Farinholt J~M 2014 {\em J. Phys. A\/} {\bf 47} 305303

\bibitem{HDD05}
Hostens E, Dehaene J and De~Moor B 2005 {\em Phys. Rev. A\/} {\bf 71}(4) 042315

\bibitem{Pro19}
Proctor T~J 2019 (\textit{Preprint} \eprint{1903.08545})

\bibitem{Tol18}
Tolar J 2018 On clifford groups in quantum computing {\em J. Phys.: Conf.
  Ser.\/} vol 1071 (IOP Publishing) p 012022

\bibitem{GRT19}
Glaudell A~N, Ross N~J and Taylor J~M 2019 {\em Ann. Phys.\/} {\bf 406} 54--70

\bibitem{ME12}
Mari A and Eisert J 2012 {\em Phys. Rev. Lett.\/} {\bf 109}(23) 230503

\bibitem{CWE15}
Carignan-Dugas A, Wallman J~J and Emerson J 2015 {\em Phys. Rev. A\/} {\bf
  92}(6) 060302

\bibitem{wil11}
Wilde M~M 2011 (\textit{Preprint} \eprint{1106.1445})

\bibitem{BR86}
Barut A and Raczka R 1986 {\em Theory of Group Representations and
  Applications\/} (Singapore: World Scientific)

\bibitem{DDB02}
DiVincenzo D~P, Leung D~W and Terhal B~M 2002 {\em IEEE Trans. Inf. Theory\/}
  {\bf 48} 580--598 ISSN 0018-9448

\bibitem{HHH99}
Horodecki M, Horodecki P and Horodecki R 1999 {\em Phys. Rev. A\/} {\bf 60}(3)
  1888--1898

\bibitem{Nie02}
Nielsen M~A 2002 {\em Phys. Lett. A\/} {\bf 303} 249--252

\bibitem{W18}
Wallman J~J 2018 {\em {Quantum}\/} {\bf 2} 47

\bibitem{knil96}
Knill E 1996 Conventions for quantum pseudocode Tech. rep. Los Alamos National
  Lab., NM (United States)

\bibitem{GW09}
Goodman R and Wallach N~R 2009 {\em Symmetry, Representations, and
  Invariants\/} vol 255 (Springer)

\bibitem{LLP+18}
Lu H, Lukens J~M, Peters N~A, Odele O~D, Leaird D~E, Weiner A~M and Lougovski P
  2018 {\em Phys. Rev. Lett.\/} {\bf 120}(3) 030502

\bibitem{IJA+19}
Imany P, Jaramillo-Villegas J~A, Alshaykh M~S, Lukens J~M, Odele O~D, Moore
  A~J, Leaird D~E, Qi M and Weiner A~M 2019 {\em npj Quantum Inf.\/} {\bf 5} 59
  ISSN 2056-6387

\end{thebibliography}
\end{document}